\newtheorem{theorem}{Theorem}
\newtheorem{lemma}[theorem]{Lemma}
\newtheorem{definition}[theorem]{Definition}
\newtheorem{fact}[theorem]{Fact}
\newtheorem*{remark}{Remark}
\newcommand{\eps}{\varepsilon}
\newcommand{\MM}{\mathcal{M}}
\newcommand{\RR}{\mathcal{R}}
\newcommand{\powerd}{\omega}
\newcommand{\opt}{\mathrm{opt}}
\title{Faster Approximation Scheme for Euclidean $k$-TSP}
\author{
Ernest van Wijland\thanks{École Normale Supérieure de Paris, France.}\and Hang Zhou\thanks{École Polytechnique, IP Paris, France.}}
\date{}
\begin{document}

\maketitle

\begin{abstract}
In the Euclidean $k$-traveling salesman problem ($k$-TSP), we are given $n$ points in the $d$-dimensional Euclidean space, for some fixed constant $d\geq 2$, and a positive integer $k$. The goal is to find a shortest tour visiting at least $k$ points.

We give an approximation scheme for the Euclidean $k$-TSP in time $n\cdot 2^{O(1/\varepsilon^{d-1})} \cdot(\log n)^{2d^2\cdot 2^d}$. This improves Arora's approximation scheme of running time $n\cdot k\cdot (\log n)^{\left(O\left(\sqrt{d}/\varepsilon\right)\right)^{d-1}}$ [J. ACM 1998]. Our algorithm is Gap-ETH tight and can be derandomized by increasing the running time by a factor $O(n^d)$.
\end{abstract}

\thispagestyle{empty}
\setcounter{page}{0}

\newpage
\pagenumbering{arabic}

\setcounter{page}{1}




\section{Introduction}
In the {Euclidean $k$-traveling salesman problem ($k$-TSP)}, we are given $n$ points in $\mathbb{R}^d$, for some fixed constant $d\geq 2$, and a positive integer $k\leq n$.
The goal is to find a shortest tour visiting at least $k$ points out of the $n$ points.

The Euclidean $k$-TSP is NP-hard~\cite{garey1976some}, so researchers turned to approximation algorithms, e.g.,~\cite{arora1998polynomial,awerbuch1995improved,mata1995approximation,mitchell1998constant}.\footnote{$k$-TSP has also been referred as ``quota TSP'' in the literature.}
The best-to-date approximation for the Euclidean $k$-TSP is due to the approximation scheme\footnote{An \emph{approximation scheme} is a $(1+\eps)$-approximation algorithm for any $\eps>0$.} of Arora~\cite{arora1998polynomial}, which is among the most prominent results in combinatorial optimization.
The randomized version of Arora's approximation scheme has a running time of \[n\cdot k\cdot (\log n)^{\left(O\left(\sqrt{d}/\eps\right)\right)^{d-1}}.\]

In this work, we give a faster approximation scheme for the Euclidean $k$-TSP; see \cref{thm:main}.
Compared  with Arora~\cite{arora1998polynomial}, our running time sheds the factor $k$ and, in addition, achieves an asymptotically optimal dependence on $\eps$.

\begin{theorem}
\label{thm:main}
    Let $d\geq 2$ be a fixed constant.
    For any $\eps>0$, there is a randomized $(1+\eps)$-approximation algorithm for the Euclidean $k$-TSP that runs in time \[n\cdot 2^{O(1/\eps^{d-1})} \cdot(\log n)^{2d^2\cdot 2^d}.\]
    The dependence on $\eps$ in the running time is asymptotically optimal under the \emph{Gap-Exponential Time Hypothesis (Gap-ETH)}.
    The algorithm can be derandomized by increasing the running time by a factor $O(n^d)$.
\end{theorem}

In the rest of the section, we outline the proof of \cref{thm:main}.

First, in a preprocessing step, the instance is partitioned into \emph{well-rounded} subinstances.
The partition algorithm in \cite{arora1998polynomial} takes $O(n\cdot k\cdot \log n)$ time.
We improve the running time of the partition algorithm to $O(n)$; see the partition theorem (\cref{thm:partition}).
To that end, we use a result on the \emph{enclosing circles} due to Har-Peled and Raiche~\cite{har2015net}.
See~\cref{sec:partition}.

Next, each subinstance is solved independently using a dynamic program based on the \emph{quadtree}~\cite{arora1998polynomial}.
The dynamic program in \cite{arora1998polynomial} takes $n\cdot k\cdot (\log n)^{\left(O\left(\sqrt{d}/\eps\right)\right)^{d-1}}$ time. We improve the running time of the dynamic program to $n\cdot 2^{O(1/\eps^{d-1})} \cdot(\log n)^{2d^2\cdot 2^{d}}$; see the dynamic programming theorem~(\cref{thm:DP}).
In order to improve the dependence in $\eps$ in the running time in \cite{arora1998polynomial}, we exploit a structure theorem (\cref{thm:structure}) that is a corollary of the approaches of Arora~\cite{arora1998polynomial} and Kisfaludi-Bak, Nederlof, and Węgrzycki~\cite{kisfaludi2022gap}; see \cref{sec:quad}. 
In order to remove the factor $k$ in the running time in \cite{arora1998polynomial}, we discretize the possible lengths of a tour into values called \emph{budgets}; see \cref{sec:budget}.
This is inspired by Kolliopoulos and Rao in the context of \emph{$k$-median}~\cite{kolliopoulos2007nearly}.
The combination of the structure theorem and the budgets is non-trivial and is the key to the improved running time of the dynamic program, see \cref{sec:first-algo,sec:improved-algo}.

The overall running time and the derandomization in \cref{thm:main} follow from the partition theorem~(\cref{thm:partition}) and the dynamic programming theorem~(\cref{thm:DP}).
The Gap-ETH tightness in \cref{thm:main} is a corollary of the Gap-ETH lower bound for the Euclidean TSP~\cite[Theorem~I.1]{kisfaludi2022gap}.

%

\section{Notations}

Let $P$ denote a set of $n$ points in $\mathbb{N}^d$ for some fixed constant $d\geq 2$.
Let $k$ be an integer in $[1,n]$.
A path $\pi$ in $\mathbb{N}^d$ is a \emph{$k$-salesman tour} if $\pi$ is a closed path visiting at least $k$ points from $P$.
Let $w(\pi)$ denote the \emph{length} of $\pi$.
In the \emph{Euclidean $k$-traveling salesman problem ($k$-TSP)}, we look for a $k$-salesman tour $\pi$ that minimizes $w(\pi)$.
Let $\opt$ denote the minimum length of a $k$-salesman tour.
For notational convenience, let $\powerd$ denote $2^d$.

\begin{definition}[well-rounded instance, {\cite[Section~3.2]{arora1998polynomial}}]
\label{def:well-rounded}
Consider an instance for the Euclidean $k$-TSP.
Let $L\in\mathbb{N}$ denote the side length of the bounding box for the instance.
We say that the instance is \emph{well-rounded} if $L=O(k^2)$, all points in the instance have integral coordinates in $\{0,\dots,L\}^d$, and the minimum nonzero internode distance is at least $8$.
\end{definition}

\section{Partitioning Into Subinstances}

\label{sec:partition}

\begin{theorem}[partition theorem]
\label{thm:partition}
    Let $\mathcal{I}$ be an instance for the Euclidean $k$-TSP.
    There is a randomized algorithm that computes in expected $O(n)$ time a partition of $\mathcal{I}$ into a family of well-rounded subinstances $\mathcal{I}_1,\dots,\mathcal{I}_\ell$ for some $\ell\geq 1$, such that with probability at least $1-2/\log k$, an optimal solution to $\mathcal{I}$ is completely within $\mathcal{I}_j$ for some $j\in [1,\ell]$.
    The algorithm can be derandomized by increasing the running time by a factor $n$.
\end{theorem}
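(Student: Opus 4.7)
The plan is to imitate Arora's recursive partition~\cite{arora1998polynomial} while replacing its $\Theta(nk\log n)$-time clustering step by the linear-time enclosing-ball subroutine of Har-Peled and Raiche~\cite{har2015net}.

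\emph{Scale estimation.} I would first invoke~\cite{har2015net} to compute, in $O(n)$ time, a constant-factor approximation $R$ to the smallest radius of a $d$-ball containing at least $k$ input points; this $R$ is tied to $\OPT$ from both sides ($\OPT \geq \Omega(R)$ by the diameter of any $k$-set and $\OPT \leq O(R\cdot k^{1-1/d})$ by a space-filling-curve tour on the $k$ points inside the optimal ball). After rescaling coordinates using $R$ to set the scale and snapping to the integer grid (absorbing a $1+O(1/k)$ distortion into the final $(1+\eps)$ guarantee), the instance has integer coordinates, minimum nonzero internode distance at least $8$, and $\OPT = O(k^2)$.

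\emph{Partitioning.} Draw a random shift $\bm{a}$ and overlay a shifted axis-parallel grid of cell side $s = \Theta(k^2)$; group input points by grid cell to obtain subinstances $\mathcal{I}_1,\dots,\mathcal{I}_\ell$. Each $\mathcal{I}_j$ has integer coordinates, minimum distance at least $8$, and bounding-box side at most $s = O(k^2)$, hence is well-rounded. Hashing each point into its cell computes the partition in $O(n)$ expected time; derandomization enumerates the $O(n^d)$ distinct shifts. An edge of an optimal tour $\pi^\star$ of length $\ell$ is cut by a uniformly random grid hyperplane with probability at most $\ell/s$ per coordinate direction, so
\[
\Pr[\pi^\star\text{ is split across several }\mathcal{I}_j\text{'s}] \;\leq\; \frac{d\cdot \mathrm{wt}(\pi^\star)}{s}.
\]
To drive this bound to $2/\log k$ while keeping $s = O(k^2)$, I would perform the partition recursively over a geometric sequence of scales $s_0 > s_1 > \cdots$ with $s_0 = \Theta(k^2)$, cutting at each scale only along \emph{empty} slabs identified via the Har-Peled--Raiche net in amortized $O(1)$ per point per scale. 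Each of the $O(\log k)$ scales then contributes $O(1/\log^2 k)$ failure, which unions to $O(1/\log k)$, while the per-scale work telescopes to $O(n)$ in total.

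\emph{Main obstacle.} The delicate point is harmonizing the bounding-box constraint $s = O(k^2)$ with the random-shift success-probability requirement $s = \Omega(\OPT\cdot\log k)$: when $\OPT$ saturates at $\Theta(k^2)$, these conflict at any single scale, and the recursive scheme above is required to satisfy both. Proving that the recursion terminates with every final piece simultaneously well-rounded, and that the total running time stays at $O(n)$ rather than $O(n\log k)$, is the central task and relies crucially on the linear-time net construction of~\cite{har2015net}, which supplies the empty slabs at each scale in amortized constant time per point.
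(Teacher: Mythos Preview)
Your scale-estimation step matches the paper's, but everything after it diverges. The paper does \emph{not} design a new partition: having obtained an estimate $A$ with $\opt\le A\le 2d^{3/2}k^{1-1/d}\cdot\opt$ from Har-Peled--Raiche, it simply invokes Arora's existing partition procedure verbatim, only redefining the granularity parameter $\rho:=A\eps/(16d^{3/2}k^{2-1/d})$. The success probability $1-2/\log k$ and the well-roundedness of the pieces are inherited directly from Arora's analysis; nothing new is proved here. Your random-grid-plus-recursive-scales scheme is an unnecessary reinvention, and the ``main obstacle'' you flag (the tension between $s=O(k^2)$ and $s=\Omega(\opt\cdot\log k)$) is exactly why a single random grid does not work; your proposed multi-scale fix with ``empty slabs'' is a sketch, not an argument, and you yourself concede the termination and $O(n)$-time claims are unproven.

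The derandomization is where your proposal concretely fails to match the theorem statement. The claim is a multiplicative factor of $n$, i.e.\ $O(n^2)$ total. Enumerating $O(n^d)$ shifts gives factor $n^d$, which is too much for $d\ge 2$. More importantly, you derandomize the wrong thing: in the paper's approach Arora's partition needs no further derandomization, and the \emph{only} randomness to remove is inside the Har-Peled--Raiche routine \texttt{ndpAlg}. The paper does this by replacing the random pivot choice at each level of \texttt{ndpAlg} with an enumeration over all $|W_{i-1}|$ candidates and keeping the best; since $|W_i|\le(15/16)|W_{i-1}|$ still holds for the best pivot, the total work is $\sum_i O(|W_{i-1}|^2)=O(n^2)$. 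You do not address derandomizing the scale-estimation step at all.
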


In the rest of the section, we prove \cref{thm:partition}.

Recall that, in the approach of Arora~\cite{arora1998polynomial}, an important step is to compute a good approximation for the optimal cost.
\cref{fact:smallest-square} relates the optimal cost with the side length of the smallest $d$-dimensional hypercube containing $k$ points.

\begin{fact}[{\cite[Section~3.2]{arora1998polynomial}}]
\label{fact:smallest-square}
The cost of the optimal solution to the Euclidean $k$-TSP is at most $dk^{1-(1/d)}$ times larger than the side length of the smallest $d$-dimensional hypercube containing $k$ points.
\end{fact}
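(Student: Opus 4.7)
Let $Q$ be the smallest axis-aligned hypercube containing $k$ of the input points, and denote its side length by $s$. I will exhibit an explicit closed tour through at least $k$ points of $Q$ of length at most $d\cdot k^{1-1/d}\cdot s$; this upper bound then transfers to the optimum.

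The construction is a standard boustrophedon (snake) traversal of $Q$. Set $m=\lceil k^{1/d}\rceil$ and subdivide $Q$ into $m^d$ congruent sub-hypercubes of side $s/m$. Order these sub-hypercubes $C_1,\ldots,C_{m^d}$ along a boustrophedon Hamiltonian path in the $m\times\cdots\times m$ grid, so that each pair $(C_i,C_{i+1})$ is face-adjacent. The tour processes the $C_i$ in this order: whenever $C_i$ contains input points, it inserts a short local path visiting those points before passing through the shared facet into $C_{i+1}$, and it is closed at the end. The length splits naturally into cross-cell transitions and in-cell detours. Since $C_i$ and $C_{i+1}$ share a facet, a transition costs at most $s/m$, so the total cross-cell contribution is at most $(m^d-1)\cdot s/m\leq s\cdot m^{d-1}$. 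Each in-cell segment adds at most the cell diameter $\sqrt{d}\cdot s/m$ per visited point, summing to at most $\sqrt{d}\cdot s\cdot k/m$ in total. Plugging $m=\lceil k^{1/d}\rceil$ makes both terms $O(s\cdot k^{1-1/d})$.

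The main obstacle is purely in the bookkeeping of constants: the claimed factor $d$ is essentially tight in $d=2$, where a naive sum of the two contributions above overshoots $2\cdot k^{1-1/d}\cdot s$. The fix is to exploit the freedom in choosing the snake's entry and exit points on each shared facet: by aligning them with actual input points of each non-empty cell, the in-cell detour in that cell is absorbed into the adjacent cross-cell move, saving one cell-diameter per non-empty cell. Carrying this through gives the claimed constant $d$ for all $d\geq 2$; the remaining steps reduce to elementary arithmetic and the rounding bound $m\leq k^{1/d}+1$.
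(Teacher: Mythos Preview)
The paper does not supply a proof of this fact; it is simply quoted from Arora~\cite{arora1998polynomial}. Your boustrophedon construction is the standard route to such bounds (this is essentially Few's argument) and certainly yields the correct order $O(k^{1-1/d})\cdot s$, so at that level the plan is sound.

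The gap is precisely where you flag it: recovering the leading constant $d$. Subdividing $Q$ into $m^d$ sub\emph{cubes} forces every in-cell step to be bounded by the full cell diameter $\sqrt{d}\,s/m$, which is why the two contributions add up to roughly $(1+\sqrt{d})\,k^{1-1/d}s$ and overshoot at $d=2$. Your proposed fix does not close this. Once the cross-cell edge is allowed to run from an arbitrary input point of $C_i$ to an arbitrary input point of the adjacent $C_{i+1}$, its length is only bounded by $\mathrm{diam}(C_i\cup C_{i+1})=\sqrt{d+3}\,s/m$. Concretely, take $d=2$, $m=\lceil\sqrt{k}\rceil$, and place exactly one point per cell at alternating opposite corners; then every transition in your snake realises length $\sqrt{5}\,s/m$, and the resulting tour has length about $\sqrt{5}\,\sqrt{k}\,s>2\sqrt{k}\,s$. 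So the ``elementary arithmetic'' you defer to does not go through for this construction.

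The clean remedy is to change the decomposition rather than the accounting: subdivide $Q$ in only $d-1$ directions, into $m^{d-1}$ \emph{rods} of shape $s\times(s/m)^{d-1}$, and within each rod visit the points sorted along the long axis. The backbone length is still $s\,m^{d-1}$, but the per-point detour is now bounded by the cross-section diameter $\sqrt{d-1}\,s/m$ instead of $\sqrt{d}\,s/m$. With $m\approx k^{1/d}$ the leading constant becomes $1+\sqrt{d-1}\le d$ for every $d\ge 2$, with equality at $d=2$, and no ad~hoc fix is needed.
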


Estimating the smallest $d$-dimensional hypercube containing $k$ points takes $O(nk\log n)$ time in \cite{arora1998polynomial}.
We improve this running time to $O(n)$ in \cref{lem:har-cube}.
This is achieved using a result of Har-Peled and Raiche~\cite{har2015net} on the estimation of the smallest $d$-dimensional \emph{ball} (\cref{lem:har-ball}).

\begin{lemma}[{\cite[Corollary 4.18]{har2015net}}]
\label{lem:har-ball}
Let $\lambda>0$.
For a set of $n$ points in $\mathbb{R}^d$ and a positive integer $k$, one can $(1+\lambda)$-approximate, in expected $O(n/\lambda^d)$ time, the radius of the smallest $d$-dimensional ball containing $k$ points.
\end{lemma}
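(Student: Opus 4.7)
The plan is to apply the net-and-prune paradigm of Har-Peled and Raiche, combining a randomized constant-factor estimate of the optimal radius with snapping to a geometric grid. Since the statement is imported as a corollary from their work, what follows is how I would reconstruct the argument.

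First, I would obtain a constant-factor estimate $R$ of the optimal radius $r^*$ in expected $O(n)$ time. A clean way is to search over scales $s$ via geometric doubling: hash each input point into its cell of an axis-aligned grid of side $s$, and check whether any cell contains at least $k$ points. The smallest $s$ for which this occurs, up to a small neighborhood enlargement, gives an $O(\sqrt{d})$-approximation of $r^*$, since the optimal ball is contained in $O(1)$ cells of a grid of side $\Theta(r^*)$. Randomization over starting scales, together with a doubling search, lets one land on the correct order of magnitude in expected linear time rather than scanning all $\Theta(\log L)$ scales.

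Next, given $R = \Theta(r^*)$, I would snap every input point to a grid $G$ of side $\Theta(\lambda R/\sqrt{d})$, incurring additive error $O(\lambda r^*)$ in every distance of interest. Within any region of diameter $O(R)$, the snapped image occupies only $O(1/\lambda^d)$ distinct vertices of $G$, and the resulting weighted instance (each grid vertex carrying the number of points snapped to it) has small local support. Because the optimal center lies within $r^* \le R$ of every one of its $k$ enclosed points, candidate centers can be restricted to an $O(1/\lambda^d)$-size grid neighborhood anchored at some input point. For each candidate center, a radial sweep over the nearby weighted representatives returns the smallest radius covering total weight at least $k$; the minimum over all candidates, across all anchoring input points, yields a $(1+O(\lambda))$-approximation, rescaled to $(1+\lambda)$.

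The main obstacle will be arranging the per-point processing so that each input point contributes only $O(1/\lambda^d)$ work, giving the claimed $O(n/\lambda^d)$ total. This requires the weighted snapped representatives and the candidate centers to live in a structure that supports constant-time local lookups over bounded neighborhoods, which is precisely what Har-Peled and Raiche's layered grid machinery provides. A subtler point is keeping the initial constant-factor estimate genuinely linear: a naive scheme that picks a single random point and measures its $k$-th nearest-neighbor distance can cost $\Omega(n^2/k)$ in expectation, so the scale-doubling with hashing, or equivalently their randomized grid pruning, is essential to the linear preprocessing bound.
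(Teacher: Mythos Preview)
The paper does not prove this lemma at all; it is quoted verbatim as Corollary~4.18 of Har-Peled and Raiche~\cite{har2015net} and used as a black box. There is therefore no proof in the paper to compare your reconstruction against.

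That said, since you explicitly frame your write-up as a reconstruction of the cited argument, a brief comment on accuracy: your two-phase outline (constant-factor estimate, then grid-snapping to refine to $(1+\lambda)$) is the correct shape, but your Phase~1 diverges from what Har-Peled and Raiche actually do. Their algorithm \texttt{ndpAlg} --- which this paper even references and derandomizes in the proof of \cref{thm:partition} --- does not search over geometric scales via hashing. Instead, it repeatedly picks a random point $p$ from the current working set, uses $p$ to build a local net, and prunes a constant fraction of the points, so that the working-set sizes decrease geometrically and the total work telescopes to $O(n)$. Your sentence ``Randomization over starting scales, together with a doubling search, lets one land on the correct order of magnitude in expected linear time'' is the weak link: a doubling search over $\Theta(\log L)$ scales at $O(n)$ per scale is $O(n\log L)$, and it is not clear how randomizing the starting scale alone brings this down to expected $O(n)$. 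The pruning, not the randomized scale selection, is what buys linear time in~\cite{har2015net}.
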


\begin{lemma}
\label{lem:har-cube}
For a set of $n$ points in $\mathbb{R}^d$ and a positive integer $k$, one can $2\sqrt{d}$-approximate, in expected $O(n)$ time, the side length of the smallest $d$-dimensional hypercube containing $k$ points.
\end{lemma}

\begin{proof}
Let $R^*$ denote the radius of the smallest $d$-dimensional ball containing $k$ points.
Applying \cref{lem:har-ball} with $\lambda=1$, one can compute in expected $O(n)$ time an estimate $R$ such that $R^*\leq R\leq 2R^*$.
On the one hand, there exists a $d$-dimensional hypercube of side length $2R^*\leq 2R$ that contains at least $k$ points.
On the other hand, since the longest diagonal of a $d$-dimensional hypercube is equal to $\sqrt{d}$ times the side length of that hypercube, any $d$-dimensional hypercube of side length $2R^*/\sqrt{d}\geq R/\sqrt{d}$ contains at most $k$ points.
So the side length of the smallest $d$-dimensional hypercube containing $k$ points is in $[R/\sqrt{d},2R]$.
The claim follows.
\end{proof}

\begin{proof}[Proof of the partition theorem (\cref{thm:partition})]
From \cref{fact:smallest-square} and \cref{lem:har-cube}, there is a randomized algorithm that computes in expected $O(n)$ time an estimate $A$ for the cost of the optimal solution such that \[\opt\leq A\leq 2d^{3/2}k^{1-(1/d)}\cdot \opt.\]
The first part of the claim follows from arguments that are almost identical to \cite{arora1998polynomial}, except by modifying the definition of the parameter $\rho$ to \[\rho:=\frac{A\eps}{16d^{3/2}k^{2-(1/d)}}.\]

Now we prove the second part of the claim.
We only need to derandomize the algorithm in \cref{lem:har-cube}.
This requires derandomizing the algorithm in \cref{lem:har-ball}, which is called $\texttt{ndpAlg}$ in \cite[Figure 3.1]{har2015net}.
To that end, we remove Line $1$ of the algorithm \texttt{ndpAlg} in \cite{har2015net}, which randomly picks a point $p$ from the set $W_{i-1}$.
Instead, we enumerate \emph{all} points $p$ from $W_{i-1}$.
For each such point $p$, we compute a set $W_i^p$ using Lines 2--7 of the algorithm \texttt{ndpAlg} in \cite{har2015net}.
Finally, we let $W_i$ be the set $W_i^p$ with minimum cardinality for all points $p\in W_{i-1}$.
This completes the description of the derandomized algorithm, see \cref{alg:partition}.

\begin{algorithm}
   \caption{Derandomization for \texttt{ndpAlg}.  $W\subseteq \mathbb{R}^d$ denotes a set of $n$ input points.}
   \label{alg:partition}
   $W_0\gets W$\\
   $i\gets 1$\\
   \While{$W_{i-1}\neq \emptyset$}{
       \ForAll{$p\in W_{i-1}$}{
           Compute $W_i^p$ \Comment{Lines 2-7 of algorithm \texttt{ndpAlg} in \cite{har2015net}}
       }
       Let $W_i$ be the set $W_i^p$ with minimum cardinality for all $p\in W_{i-1}$\\
       $i\gets i+1$
   }
\end{algorithm}

It remains to show that the running time of the derandomized algorithm (\cref{alg:partition}) is $O(n^2)$.
Let $z$ denote the number of iterations in the \textbf{while} loop in \cref{alg:partition}.
Consider any integer $i\in [1,z]$.
From the analysis in \cite[Lemma~3.12]{har2015net}, for each $p\in W_{i-1}$, the set $W_i^p$ in \cref{alg:partition} can be computed in $O(|W_{i-1}|)$ time.
So the overall time to compute $W_i^p$ over all $p\in W_{i-1}$ is $O(|W_{i-1}|^2)$.
Using an analysis similar to \cite[Lemma~3.12]{har2015net}, there exists some $p\in W_{i-1}$ such that
$|W_i^p|\leq (15/16)|W_{i-1}|$, thus
$|W_i|\leq (15/16)|W_{i-1}|$ by the definition of $W_i$.
Therefore, the overall running time of \cref{alg:partition} is
\[\sum_{i=1}^{z} O(|W_{i-1}|^2)\leq \sum_{i=1}^{z} (15/16)^{i-1}\cdot O(|W_0|^2)=O(|W_0|^2)=O(n^2).\]
This completes the proof of the second part of the claim.
\end{proof} 

\section{Dynamic Programming}

\label{sec:DP}
\begin{theorem}[dynamic programming theorem]
\label{thm:DP}
    Consider a well-rounded instance for the Euclidean $k$-TSP. 
    There is a randomized  algorithm with running time $n\cdot 2^{O(1/\eps^{d-1})} \cdot(\log n)^{2d^2\cdot 2^{d}}$ such that, with probability at least $1/2$, the algorithm outputs a $(1+\eps)$-approximate solution. 
    The algorithm can be derandomized by increasing the running time by a factor $O(n^d)$.
\end{theorem}

In the rest of the section, we prove \cref{thm:DP}.

\subsection{Preliminaries: Notations, Quadtree, and Structure Properties}
\label{sec:quad}

Let $L$ denote the side length of the bounding box of the instance. 
Since the instance is well-rounded (\cref{def:well-rounded}), we may assume that $P\subseteq \{0,\dots,L\}^d$ and $L=O(k^2)$.

We review the \emph{quadtree}~\cite{arora1998polynomial} as well as its structural properties established by Arora~\cite{arora1998polynomial} and 
Kisfaludi-Bak, Nederlof, and Węgrzycki~\cite{kisfaludi2022gap}.

We follow the notations in \cite{kisfaludi2022gap}.
We pick $a_1,\dots,a_d\in \{1,\dots,L\}$ independently and uniformly at random and define $\bm a := (a_1,\ldots, a_d)\in\{0,\dots,L\}^d$.
Consider the hypercube
\[C(\bm a):= \bigtimes\limits_{i=1}^d [-a_i+1/2, 2L-a_i+1/2].\]
Note that $C(\bm a)$ has side length $2L$ and each point from $P$ is contained in $C(\bm a)$.

We define the {\it dissection} of $C(\bm a)$ to be a tree constructed recursively, where each vertex is associated with a hypercube in $\mathbb R^d$.
The root of the tree is associated with $C(\bm a)$.
Each non-leaf vertex of the tree that is associated with a hypercube $\bigtimes_{i=1}^d [l_i,u_i]$ has $\powerd$ children with which we associate hypercubes $\bigtimes_{i=1}^d I_i$, where $I_i$ is either $[l_i,(l_i+u_i)/2]$ or $[(l_i+u_i)/2,u_i]$. Each leaf vertex of the tree is associated with a hypercube of unit length.

A \emph{quadtree} is defined similarly as the dissection of $C(\bm a)$, except we stop the recursive partitioning as soon as the associated hypercube of a vertex contains at most one point from $P$.
Each hypercube associated with a vertex in the quadtree is called a {\it cell} in the quadtree.

For each cell $C$ in the quadtree, let $\partial C$ denote the union of all facets of $C$.

\begin{definition}[grid, {\cite[Definition~II.4]{kisfaludi2022gap}}]
\label{def:grid}
Let $F$ be a $(d-1)$-dimensional hypercube.
Let $t$ be a positive integer.
We define \emph{grid$(F,t)$}$\subseteq\mathbb{R}^{d-1}$ to be an orthogonal lattice of $t$ points in $F$.
Thus, if the hypercube has side length $l$, the minimum distance between any pair of points of \emph{grid$(F, t)$} is $l/t^{1/(d-1)}$.
\end{definition}


\begin{definition}[fine multiset, adaptation from {\cite[Section 5.1 in the full version]{kisfaludi2022gap}}]
\label{def:fine}
Let $m$ and $r$ be positive integers.
Let $C$ be a cell in the quadtree.
Let $B$ be a multiset of points in $\partial C$.
For each facet $F$ of $C$, let $b_F$ denote the number of points in $B$ that are in $F$.
We say that $B$ is \emph{$(m,r)$-fine} if, for all facets $F$ of $C$, either one of the two following cases holds:
\begin{enumerate}
\item $b_F\leq 1$ and $B\cap F\subseteq\emph{grid}(F,m)$;
\item $b_F\geq 2$ and $B\cap F \subseteq \emph{grid}(F, g(b_F))$, where $g(\cdot)$ is an integer-valued function such that $g(b_F)\leq r^{2d-2}/b_F$. Moreover, each point in $\emph{grid}(F, g(b_F))$ occurs at most twice in $B\cap F$.
\end{enumerate}
The parameters $(m,r)$ are omitted when clear from the context. 
\end{definition}

\begin{definition}[$(m,r)$-simple paths, adaptation from {\cite[Definition~1]{arora1998polynomial}} and {\cite[Definition~III.2]{kisfaludi2022gap}}]
\label{def:r-simple}
Let $m$ and $r$ be positive integers. A collection $\mathcal{Q}$ of paths in $\mathbb{R}^d$ is $(m,r)$-{\normalfont simple} if, for every cell $C$, the intersection between $\mathcal{Q}$ and $\partial C$ is $(m,r)$-fine.
\end{definition}


\begin{theorem}[structure theorem, corollary of \cite{arora1998polynomial} and \cite{kisfaludi2022gap}]
\label{thm:structure}
Let $\bm a$ be a random vector in $\{1,\dots,L\}^d$.
Let $m=(O((\sqrt{d}/\eps)\log L))^{d-1}$ and $r=O(d^2/\eps)$.
With probability at least $1/2$, there is a $k$-salesman tour $\pi$ such that both of the following properties hold:
\begin{itemize}
\item the path collection $\{\pi\}$ is $(m,r)$-simple;
\item $w(\pi)\leq (1+ \eps)\cdot\opt$.
\end{itemize}
\end{theorem}

\begin{proof}
From~Theorem~5 and Theorem~10 in~\cite{arora1998polynomial}, for some $m=(O((\sqrt{d}/\eps)\log L))^{d-1}$, the quadtree defined by $\bm a$ has an associated $k$-salesman tour $\pi_0$ such that\footnote{The bound in expectation is obtained in the proofs in~\cite{arora1998polynomial}.}
\begin{equation}
\label{eqn:w0}
\mathbb{E}[w(\pi_0)]\leq (1+\eps/6)\cdot \opt,
\end{equation}
 and $\pi_0$ crosses each facet $F$ of each cell of the quadtree only at points from $\text{grid}(F,m)$.

We then apply Theorem~III.3 from \cite{kisfaludi2022gap} on $\pi_0$ for some parameter $r\in \mathbb{R}$.
This results in an $(m, r)$-simple tour $\pi$ visiting the same set of points as $\pi_0$, such that 
\begin{equation}
\label{eqn:w}
\mathbb{E}[w(\pi)]\leq (1+O(d^2/r))\cdot w(\pi_0)\leq (1+\eps/6)\cdot w(\pi_0),
\end{equation}
where the second inequality holds for some $r=O(d^2/\eps)$ that is well-chosen.

If $\pi$ crosses a facet $F$ of a cell of the quadtree only once, letting $q$ denote that crossing, then $q$ belongs to $\pi_0$.
Since $\pi_0$ crosses each facet $F$ only at points from $\text{grid}(F,m)$, the above crossing $q$ belongs to $\text{grid}(F,m)$.

From \eqref{eqn:w0} and \eqref{eqn:w}, we have 
\[\mathbb{E}[w(\pi)]\leq (1+\eps/6)^2\cdot\opt<(1+ \eps/2)\cdot\opt.\]
Markov's inequality implies that, with probability at least $1/2$, we have $w(\pi_0)\leq (1+\eps)\cdot \opt$.
This completes the proof of the claim.
\end{proof}

\subsection{Budget Multipath Problem}
\label{sec:budget}

\begin{definition}[budgets]
\label{def:budget}
Let $\Phi=dk^{1-1/d}L$. We say that $s\in\mathbb{R}$ is a \emph{budget} if either 
\begin{itemize}
    \item $s=0$; or 
    \item $s\in[1/(r^2+m^{1/(d-1)}), (1+\eps)^2 \cdot \Phi]$ and there exists $i\in \mathbb{N}$ such that \[(1+\eps/(2d\cdot r^{d-1} + 3 \log_2 n))^i=s.\]
\end{itemize}
Let $\mathcal{S}$ be the set of all budgets.
\end{definition}

\begin{definition}[budget multipath problem]
\label{def:BMP}
We are given
\begin{itemize}
\item a cell $C$ in the quadtree,
\item a fine multiset $B\subseteq\partial C$,
\item a perfect matching $M$ on $B$,
\item a budget $s\in\mathcal{S}$.
\end{itemize}
We look for a collection $\mathcal{Q}$ of paths in $C$ satisfying all of the following properties:
\begin{itemize}
    \item $\mathcal{Q}$ is $(m, r)$-simple;
    \item $\mathcal{Q}$ has total length at most $s$;
    \item the intersection between $\mathcal{Q}$ and $\partial C$ is $B$;
    \item there is a one-to-one correspondence between the paths in $\mathcal{Q}$ and the edges in $M$, where we say that a path $q\in\mathcal{Q}$ \emph{corresponds} to an edge $(u, v)\in M$ if and only if $u$ and $v$ are the two endpoints of $q$.
\end{itemize}
The goal is to maximize the number of points in $P$ visited by $\mathcal{Q}$.
\end{definition}

The Euclidean $k$-TSP can be reduced to the budget multipath problem.
To see the reduction, consider the budget multipath problem for the  root cell $C_0$ in the quadtree, the multiset $B:=\emptyset$, the set $M:=\emptyset$, and every budget $s\in\mathcal{S}$. 
Let $s^*\in \mathcal{S}$ be the minimum $s$ such that the solution $\mathcal{Q}$ to the above budget multipath problem on $(C_0,\emptyset,\emptyset,s)$ visits at least $k$ points.
We will show in \cref{sec:improved-algo} that $s^*$ is a near-optimal solution to the Euclidean $k$-TSP.

\subsection{First Algorithm: Dynamic Program with Budgets}
\label{sec:first-algo}

To simplify the presentation, we start by presenting in this section a first algorithm for the budget multipath problem that conveys the main ideas in the algorithmic design, although its running time is not as good as claimed in \cref{thm:DP}.

The algorithm is a dynamic program parameterized by the budget; see \cref{{sec:first-algo-construction}}.
The analysis of the algorithm contains the main technical novelty of the paper; see \cref{sec:first-algo-analysis}.

Later in \cref{sec:improved-algo}, we improve the running time of the algorithm so as to achieve the claimed running time in \cref{thm:DP}.

\subsubsection{Construction}
\label{sec:first-algo-construction}

Consider a fixed cell $C$, a fixed budget $s\in \mathcal{S}$, and a fixed fine multiset $B\subseteq\partial C$.
We construct a set $\MM^C_s[B]$ of pairs $(M,\kappa)$, where $M$ is a perfect matching on $B$ and $\kappa$ is an integer.
Intuitively, $\kappa$ indicates the number of points that can be visited by a collection of paths $\mathcal{Q}$ such that there is a one-to-one correspondence between the paths in $\mathcal{Q}$ and the edges in $M$ and $\mathcal{Q}$ has total cost at most $s$.

For notional convenience, we denote
\[\MM^C_s:=\bigcup\limits_{\text{fine multiset } B\subseteq\partial C}\MM^C_s[B].\]

We construct $\MM^C_s[B]$ in the bottom up order of the cell $C$ in the quadtree; for a fixed cell $C$, in increasing order of the budget $s\in \mathcal{S}$; and for a fixed budget $s$, in non-decreasing order of cardinality of the fine multiset $B\subseteq \partial C$. 

\paragraph*{Leaf Cells}
Consider a leaf cell $C$.
We construct $\MM^C_s[B]$ in non-decreasing order of $|B|$.

\emph{Case 1: $|B|=0$.}
$\MM^C_s[B] := \{(\emptyset, 0)\}$.

\emph{Case 2: $|B|=2$.}
Let $u$ and $v$ be the two elements in $B$.
Since $C$ is a leaf cell, there are two subcases.

\emph{Subcase 2.1:} $C\cap P=\emptyset$. Let
\[\MM^C_s[\{u,v\}] :=
\begin{cases}
\{(\{(u,v)\}, 0)\} & \text{if } \text{dist}(u,v)\leq s \\
\emptyset & \text{if }\text{dist}(u,v)> s
\end{cases}\]

\emph{Subcase 2.2:} $C\cap P=\{p\}$ for some point $p$, letting $n_p\in\mathbb{N}$ denote the multiplicity of $p$ in $P$.
Let
\[\MM^C_s[\{u,v\}] :=
\begin{cases}
\{(\{(u,v)\}, n_p)\} & \text{if } \text{dist}(u,p)+\text{dist}(p,v)\leq s \\
\{(\{(u,v)\}, 0)\} & \text{if } \text{dist}(u,v)\leq s<\text{dist}(u,p)+\text{dist}(p,v) \\
\emptyset & \text{if }\text{dist}(u,v)> s
\end{cases}\]

\emph{Case 3: $|B|>2$.}
We construct $\MM^C_s[B]$ using the following formula:
\begin{equation}
\MM^C_s[B] := \bigcup\limits_{\substack{s_1+s_2\leq s\\ s_1,s_2\in\mathcal{S}}}\left(\bigcup\limits_{\substack{u, v\in B\\ u\neq v}} \left\{(M\cup\{(u, v)\}, \kappa) | (M, \kappa)\in\MM^C_{s_1}[B\setminus\{u, v\}], \text{dist}(u, v)\leq s_2\right\}\right).
\end{equation}

\paragraph*{Non-Leaf Cells}
Consider a non-leaf cell $C$.
Let $C_1,\ldots,C_{\powerd}$ be the $\powerd$ children of $C$ in the quadtree. First, we enumerate all possible budgets $s_1,\dots,s_{\powerd}$ for the $\powerd$ children,  such that $\sum_i s_i\leq s$.
Next, we enumerate all possible pairs $(M_1,\kappa_1)\in\MM^{C_1}_{s_1},\ldots,(M_{\powerd},\kappa_{\powerd})\in\MM^{C_{\powerd}}_{s_{\powerd}}$.
As in \cite{kisfaludi2022gap}, we say that the matchings $M_1, \dots, M_\powerd$ are \emph{compatible} if (1) for any pair of neighboring cells $C'$ and $C''$, the endpoints of the matchings on a shared facet are the same; and (2) combining $M_1, \dots, M_\powerd$ results in a set of paths with endpoints in $\partial C$. 
If the matchings $M_1,\dots, M_\powerd$ are compatible, we let $M$ denote the matching that is the result of \texttt{Join}$(M_1,\dots,M_\powerd)$, where the \texttt{Join} operation is defined in \cite{kisfaludi2022gap}.
If $B$ equals the multiset consisting of the endpoints of the edges in $M$, we insert the pair $(M,\sum_i\kappa_i)$ into $\MM^C_s[B]$.

\subsubsection{Analysis}
\label{sec:first-algo-analysis}
The following lemma shows that discretizing the possible lengths of a path into budgets in the construction of \cref{sec:first-algo-construction} preserves the near-optimality of the cost of the solution.
Its proof is delicate.

\begin{lemma}
\label{lem:discrete}
Let $C$ be any cell.
Let $\mathcal{Q}$ be an $(m,r)$-simple path collection in $C$ of length at most $(1+\eps)\cdot \Phi$.
Let $\kappa$ denote the number of points visited by $\mathcal{Q}$.
Let multiset $B$ consist of the intersection points between $\mathcal{Q}$ and $\partial C$.
Let $M$ denote a perfect matching on the points in $B$ such that there is a one-to-one correspondence between the paths in $\mathcal{Q}$ and the edges in $M$.
Then there exist $\kappa'\geq \kappa$ and $s\in \mathcal{S}$ such that $(M,\kappa')\in \mathcal{M}^C_s[B]$ and $s$ is at most $(1+\eps)$ times the total length of the paths in $\mathcal{Q}$.
\end{lemma}

In the rest of \cref{sec:first-algo-analysis}, we prove \cref{lem:discrete}.

Let $\tau$ denote the total length of the paths in $\mathcal{Q}$.
We prove the claim in two cases, depending on whether $\tau$ is smaller or greater than 
$1/(r^2+m^{1/(d-1)})$.
 
\paragraph*{Case 1: $\tau<1/(r^2+m^{1/(d-1)})$.}

We show that for any $(u, v)\in M$, $u = v$.
If $B=\emptyset$, this is trivial.
Assume that $B\neq\emptyset$, and let $p_1$ and $p_2$ be two points in $B$.

\begin{fact}
Either $\emph{dist}(p_1, p_2)\geq 1/(r^2+m^{1/(d-1)})$ or $p_1 = p_2$.
\end{fact}

\begin{proof}
Let $l$ be the side length of $C$. By \cref{sec:quad}, $l\geq 1$.

\noindent{\bf Case (a): There exists a facet $F$ of $C$ that contains both $p_1$ and $p_2$.}
Since $\mathcal{Q}$ is an $(m, r)$-simple collection of paths, by the definition of $B$ in the claim and by the definition of $(m, r)$-simple paths (\cref{def:r-simple}), $B$ is fine. Let $b_F$ denote the number of points of $B$ that are in $F$. By the definition of a fine multiset (\cref{def:fine}), either $p_1=p_2$ or $p_1, p_2\in\text{grid}(F, g(b_F))$, where $g(\cdot)$ is an integer-valued function such that $g(b_F)\leq r^{2d-2}/b_F$. Hence either $p_1=p_2$ or, by \cref{def:grid}, $\text{dist}(p_1, p_2)\geq l/(r^{2d-2})^{1/(d-1)}\geq 1/(r^2+m^{1/(d-1)})$.


\noindent{\bf Case (b): There there exists no facet of $C$ that contains both $p_1$ and $p_2$.}
Let $F_1$ and $F_2$ be two facets of $C$, such that $F_1$ contains $p_1$ and $F_2$ contains $p_2$. The non-trivial case is when $F_1$ and $F_2$ are neighboring faces. In this case, neither $p_1$ nor $p_2$ can lie on the intersection, because we are in the case where no facet of $C$ contains both $p_1$ and $p_2$. Since $B$ is a fine multiset, there are two cases according to \cref{def:fine}. In the first case, by \cref{def:grid}, for every facet $F$ of $C$, the minimum distance between two points in $\text{grid}(F, m)$ is $l/m^{1/(d-1)}$. Therefore, for both $i\in\{1,2\}$, we have $\text{dist}(p_i, F_1\cap F_2)\geq l/m^{1/(d-1)}$. Similarly, in the second case, we have $\text{dist}(p_i, F_1\cap F_2)\geq l/(r^{2d-2})^{1/(d-1)}$.
Furthermore, $F_1$ and $F_2$ are orthogonal.
Therefore, 
\[\text{dist}(p_1, p_2)\geq \sqrt{d}\min\{l/m^{1/(d-1)}, l/(r^{2d-2})^{1/(d-1)}\}\geq 1/(r^2+m^{1/(d-1)}).\]
This completes the proof of the claim.
\end{proof}

Hence for any $(u, v)\in M$, either $u=v$ or $\text{dist}(u, v)\geq 1/(r^2+m^{1/(d-1)})$. Since $1/(r^2+m^{1/(d-1)})>\tau\geq \sum_{(u,v)\in M}\text{dist}(u,v)$, it is impossible that there exists $(u, v)\in M$ such that $\text{dist}(u, v)\geq 1/(r^2+m^{1/(d-1)})$. Thus, for any $(u, v)\in M$, $u = v$.

By the definition of  the quadtree in \cref{sec:quad}, the distance from any point in $P$ to $\partial C$ is at least $1/2$, so $\mathcal{Q}$ does not visit any point from $P$. Hence $\kappa=0$. By induction on the size of $M'$, for any perfect matching $M'$ on $B$ such that $\forall (u,v)\in M', u=v$, we have $(M', 0)\in \MM_0^C[B]$. Thus, $(M, \kappa')\in\MM_s^C[B]$, where $s=0\leq (1+\eps)\cdot \tau$ and $\kappa'=0\geq\kappa$. The lemma holds when $\tau<1/(r^2+m^{1/(d-1)})$.

\paragraph*{Case 2: $\tau\geq 1/(r^2+m^{1/(d-1)})$.}

\begin{fact}
\label{fct:finebound}
Let $C$ be a cell in the quadtree.
Let $B$ be a fine multiset of points in $\partial C$. We have $|B|\leq 2d\cdot 2r^{d-1}$.
\end{fact}
\begin{proof}
Let $F$ be a facet of $C$ and $b_F$ denote the number of points of $B$ that are on $F$. By \cref{def:fine}, either one of the following two cases holds: (i) $b_F\leq 1$ and $B\cap F\subseteq\text{grid}(F, m)$; (ii) $b_F\geq 2$ and $B\cap F\subseteq\text{grid}(F, g(b_F))$ for some $g(b_F)\leq r^{2d-2}/b_F$. Moreover, each point in $\text{grid}(F, g(b_F))$ occurs at most twice in $B\cap F$.

We show that $b_F\leq 2r^{d-1}$. In case (i), this is trivial. In case (ii), since each point from $\text{grid}(F, g(b_F))$ is contained at most twice in $B\cap F$, we have $b_F\leq 2g(b_F)$. Together with $g(b_F)\leq r^{2d-2}/b_F$, we have $b_F\leq 2r^{d-1}$. Furthermore, since $C$ is a $d$-dimensional hypercube, $C$ has $2d$ facets. 
The claim follows. 
\end{proof}

\begin{lemma}
\label{lem:induc}
Let $C$ be any cell.
Let $h$ denote the height of the subtree rooted at $C$.
Let $\mathcal{Q}$ be an $(m,r)$-simple path collection in $C$ of length at most $(1+\eps)\cdot \Phi$.
Let $\tau$ denote the length of $\mathcal{Q}$.
Let $\kappa$ denote the number of points visited by $\mathcal{Q}$.
Let multiset $B$ consist of the intersection points between $\mathcal{Q}$ and $\partial C$.
Let $M$ denote a perfect matching on the points in $B$ such that there is a one-to-one correspondence between the paths in $\mathcal{Q}$ and the edges in $M$.
Then there exist $\kappa'\geq \kappa$ and $s\in \{\left(1+\eps/(2d\cdot r^{d-1}+3\log_2 n)\right)^i, i\in\mathbb{Z}\}$ such that $(M,\kappa')\in \mathcal{M}^C_s[B]$ and $s\leq \tau\cdot (1+\eps/(2d\cdot r^{d-1} + 3 \log_2 n))^{2d\cdot r^{d-1}+h}$.
\end{lemma}
\begin{proof}
We proceed by induction in the bottom-up order of the cell $C$ in the quadtree.

\noindent{\bf Case (a): $C$ is a leaf cell.}
Observe that $B$ is obtained by $|B|/2$ inclusion operations of pairs of points in the construction in \cref{sec:first-algo-construction}.
By \cref{def:r-simple}, $B$ is a fine multiset. Therefore, by \cref{fct:finebound}, we have $|B|\leq 2d\cdot 2r^{d-1}$.
Since the cost inside $C$ is obtained after at most $|B|/2\leq 2d\cdot r^{d-1}$ rounding operations, there exists $s\in \{\left(1+\eps/(2d\cdot r^{d-1}+3\log_2 n)\right)^i, i\in\mathbb{Z}\}$ such that $s\leq \tau \cdot (1+\eps/(2d\cdot r^{d-1} + 3 \log_2 n))^{2d\cdot r^{d-1}}$ such that $(M,\kappa')\in \mathcal{M}^C_s[B]$ for some $\kappa'\geq \kappa$.

\noindent{\bf Case (b): $C$ is a non-leaf cell.}
Let $C_1,\dots,C_{\powerd}$ be the children of $C$ in the decomposition. Let $\kappa_i$ denote the number of points visited by $\mathcal{Q}$ inside of $C_i$. Let $\tau_i$ denote the cost of $\mathcal{Q}$ inside of $C_i$. Let $h_i$ denote the height of the subtree rooted at $C_i$.
By induction, for each $i\in [1,\powerd]$,
there exist $\kappa_i'\geq \kappa_i$ and 
$s_i\in \{\left(1+\eps/(2d\cdot r^{d-1}+3\log_2 n)\right)^i, i\in\mathbb{Z}\}$ such that $(M_i,\kappa_i')\in \mathcal{M}^{C_i}_{s_i}[B_i]$ and $s_i\leq \tau_i\cdot (1+\eps/(2d\cdot r^{d-1} + 3 \log_2 n))^{2d\cdot r^{d-1}+h_i}$.

Let $\kappa':=\sum_i \kappa_i'$ and let $s$ be the smallest budget in $\mathcal{S}$ that is at least $\sum_i s_i$.

Combining the solutions in all $C_i$ and noting that $h\geq h_i+1$ for all $i$, we have 
\[(M, \kappa')\in \MM_s^C[B]\] and 
\[\kappa'= \sum_i \kappa_i'\geq \sum_i \kappa_i=\kappa\]
and
\begin{align*}
s&\leq (1+\eps/(2d\cdot r^{d-1} + 3 \log_2 n))\cdot \sum_i s_i\\
&\leq (1+\eps/(2d\cdot r^{d-1} + 3 \log_2 n))\cdot \sum_i \tau_i\cdot (1+\eps/(2d\cdot r^{d-1} + 3 \log_2 n))^{2d\cdot r^{d-1}+h_i}\\
&\leq (1+\eps/(2d\cdot r^{d-1} + 3 \log_2 n))^{2d\cdot r^{d-1}+h}\sum_i \tau_i\\
&=(1+\eps/(2d\cdot r^{d-1} + 3 \log_2 n))^{2d\cdot r^{d-1}+h}\cdot\tau.
\end{align*}
This completes the proof of the claim.
\end{proof}

Finally, let us bound $2d\cdot r^{d-1}+h$. 
Since the instance is well-rounded, there exists an absolute constant $D$ such that the size of the bounding box is at most $Dk^2$ (\cref{def:well-rounded}). Therefore, the height of the quadtree is at most $\lceil\log_2(Dk^2)\rceil\leq\log_2(Dn^2)+1\leq \log_2 D + 2\log_2 n + 1\leq 3\log_2 n$. Thus, $2d\cdot r^{d-1}+h\leq 2d\cdot r^{d-1} + 3 \log_2 n$ for $n$ large enough. Hence
\begin{align*}
\tau\cdot (1+\eps/(2d\cdot r^{d-1} + 3 \log_2 n))^{2d\cdot r^{d-1}+h}&\leq \tau\cdot (1+\eps/(2d\cdot r^{d-1} + 3 \log_2 n))^{2d\cdot r^{d-1} + 3 \log_2 n}\\
&\leq \tau\cdot (1+\eps).
\end{align*}
By \cref{lem:induc}, there exists $s\in\{\left(1+\eps/(2d\cdot r^{d-1}+3\log_2 n)\right)^i, i\in\mathbb{Z}\}$ and $\kappa'\geq\kappa$ such that $(M, \kappa')\in\MM_s^C[B]$ and $s\leq (1+\eps/(2d\cdot r^{d-1}+3\log_2 n)^{2d\cdot r^{d-1}+h}\cdot \tau$.
We have $s\leq (1+\eps)\cdot \tau\leq (1+\eps)^2\cdot \Phi$. Therefore, $s\in\mathcal{S}$.
This completes the proof of \cref{lem:discrete}.

\subsection{Improved Algorithm and Proof of \cref{thm:DP}}
\label{sec:improved-algo}
\subsubsection{Construction}
In order to achieve the claimed running time in \cref{thm:DP}, we combine the algorithm in \cref{sec:first-algo} with the \emph{rank-based approach} from \cite{kisfaludi2022gap}.

Let $\Gamma_B$ denote the set of all perfect matchings on $B$. We say that $M_1$ and $M_2$ in $\Gamma_B$ \emph{fit} if their union is a Hamiltonian Cycle on $B$. 

\begin{definition}[representation, \cite{kisfaludi2022gap}]
Let $B$ be a set. Let $\mathcal{A}$ and $\mathcal{A}'$ be two subsets of $\Gamma_B\times \mathbb{N}$.
We say that $\mathcal{A'}$ \emph{represents} $\mathcal{A}$ if for all $M\in\Gamma_B$ we have \[\max\{\kappa | (M',\kappa)\in\mathcal{A'}\text{ and }M\text{ fits }M'\}=\max\{\kappa | (M',\kappa)\in\mathcal{A}\text{ and }M\text{ fits }M'\}.\]
\end{definition}


\begin{lemma}[reduce, {\cite[Theorem~3.7]{bodlaender2015deterministic}}, see also {\cite[Lemma~5.2 in the full version]{kisfaludi2022gap}}]
\label{lem:reduce}
There exists an algorithm, called \emph{\texttt{reduce}}, that given a set $B$ and $\mathcal{A}\subseteq\Gamma_B\times\mathbb{N}$, computes in time $|\mathcal{A}|\cdot 2^{O(|B|)}$ a set $\mathcal{A}'\subseteq\mathcal{A}$ such that $\mathcal{A}'$ represents $\mathcal{A}$ and $|\mathcal{A}'|\leq 2^{|B|-1}$.
\end{lemma}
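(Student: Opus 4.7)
My plan is to follow the standard rank-based (representative-set) paradigm, building $\mathcal{A}'$ incrementally via Gaussian elimination over $\mathrm{GF}(2)$. To each matching $M'\in\MM_B$ I would associate its \emph{fit indicator} $\chi_{M'}\in\mathrm{GF}(2)^{\MM_B}$, whose $M$-th coordinate equals $1$ iff $M$ fits $M'$. The algorithm sorts the input $\mathcal{A}$ in non-increasing order of $\kappa$ and scans its elements in that order, inserting $(M',\kappa)$ into $\mathcal{A}'$ iff $\chi_{M'}$ is linearly independent of the fit indicators of the pairs already inserted.

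Correctness is a one-line parity argument. Fix $M\in\MM_B$, let $\kappa^\star:=\opt(M,\mathcal{A})$, and pick a witness $(M^\star,\kappa^\star)\in\mathcal{A}$. If $(M^\star,\kappa^\star)$ was selected we are done. Otherwise $\chi_{M^\star}=\sum_i \chi_{M'_i}$ over $\mathrm{GF}(2)$ for previously selected pairs $(M'_i,\kappa_i)\in\mathcal{A}'$ with $\kappa_i\geq\kappa^\star$. Evaluating this identity at coordinate $M$ yields $1=\sum_i (\chi_{M'_i})_M$, so an odd, hence positive, number of the $M'_i$ fit $M$; any such $M'_i$ certifies $\opt(M,\mathcal{A}')\geq\kappa_i\geq\kappa^\star$.

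The size bound rests on the following algebraic fact about perfect matchings: the $\mathrm{GF}(2)$ fit matrix $\Pi\in\mathrm{GF}(2)^{\MM_B\times\MM_B}$ defined by $\Pi[M_1,M_2]=[M_1\text{ fits }M_2]$ has rank at most $2^{|B|-1}$. I would prove this by exhibiting an explicit factorization $\Pi=A A^\top$ through a $2^{|B|-1}$-dimensional space indexed by the subsets of $B$ of even cardinality, using the observation that $M_1$ fits $M_2$ iff $M_1\cup M_2$ forms a single Hamiltonian cycle on $B$, and counting compatible cuts modulo two. Since every $\chi_{M'}$ is a row of $\Pi$ and the greedy procedure inserts only linearly independent vectors, at most $2^{|B|-1}$ insertions can succeed.

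For the running time, it would be prohibitive to materialize $\chi_{M'}$, which has $(|B|-1)!!$ coordinates. Instead I would represent each vector by its image $A[M',\cdot]\in\mathrm{GF}(2)^{2^{|B|-1}}$, which can be computed directly from $M'$ in $2^{O(|B|)}$ time. Maintaining these images in reduced row-echelon form makes each independence test and basis update run in $2^{O(|B|)}$ time, yielding the claimed total of $|\mathcal{A}|\cdot 2^{O(|B|)}$. The principal technical obstacle is establishing the rank bound together with an efficiently evaluable factorization $A$; once that algebraic ingredient is secured, the rest is a textbook incremental Gaussian elimination.
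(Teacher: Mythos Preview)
The paper does not prove this lemma; it is stated as a black-box citation from Bodlaender, Cygan, Kratsch, and Nederlof~\cite{bodlaender2015deterministic}, so there is no in-paper proof to compare your proposal against. Your sketch is exactly the standard rank-based argument from that reference: sort by weight, greedily retain matchings whose fit vectors are $\mathrm{GF}(2)$-independent, bound the number retained by the rank of the fit matrix, and realise the linear algebra in a compressed $2^{O(|B|)}$-dimensional space via an explicit factorisation. The correctness parity argument you give is precisely the one used there, correctly adapted to the $\max$ convention of the present paper by processing in non-increasing order of~$\kappa$.

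One small inaccuracy worth tightening: the standard factorisation $\Pi=AA^\top$ indexes columns not by the even-cardinality subsets of $B$ but by the cuts $(S,B\setminus S)$ with a fixed vertex forced into $S$ (equivalently, by unordered bipartitions of $B$); one sets $A[M,S]=1$ iff every edge of $M$ lies entirely inside $S$ or entirely inside $B\setminus S$. Then $(AA^\top)[M_1,M_2]$ counts, modulo $2$, the $2^{c-1}$ cuts consistent with both matchings, where $c$ is the number of cycles in $M_1\cup M_2$; this is odd iff $c=1$. The column set still has size $2^{|B|-1}$, so your dimension count and hence the bound $|\mathcal{A}'|\le 2^{|B|-1}$ are unaffected, and each row $A[M,\cdot]$ is computable in $2^{O(|B|)}$ time as you require.
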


Let $C$ be a cell in the quadtree and let $s\in \mathcal{S}$.
We define the family $\{\RR^C_s[B]\}_B$ and the set $\RR^C_s$ in the same way as we define the family $\{\MM^C_s[B]\}_B$ and the set $\MM^C_s$ in \cref{sec:first-algo}, except that we use \texttt{reduce} so as to keep the number of elements in $\RR_s^C[B]$ bounded.

\begin{remark}
    It is standard to enrich the dynamic program so that we obtain a collection of paths instead of the total length of that collection. Indeed, once the dynamic programming table is computed, one can recursively reconstruct the corresponding path.
\end{remark}

\subsubsection{Analysis}

\begin{lemma}[adaptation from {\cite[Lemma 5.3 in the full version]{kisfaludi2022gap}}]
\label{lem:repr}
For any cell $C$ in the quadtree, any budget $s\in \mathcal{S}$, and any fine multiset $B\subseteq\partial C$, the set $\RR^C_s[B]$ represents $\MM^C_s[B]$.
\end{lemma}

\cref{lem:RT} is an adaptation from \cite{kisfaludi2022gap}. 

\begin{lemma}[adaptation from {\cite[Lemma~5.4 and Claim~5.5 in the full version]{kisfaludi2022gap}}]
\label{lem:RT}
The running time of the algorithm for all cells $C$ in the quadtree, for all budgets $s\in\mathcal{S}$, and for all fine multisets $B\subseteq \partial C$ is $n\cdot 2^{O(r^{d-1})}\cdot \log^{2d^2\cdot 2^d} n$.
\end{lemma}

\begin{proof}[Proof of \cref{thm:DP}]
From the structure theorem (\cref{thm:structure}), with probability at least $1/2$, there exists a $k$-salesman tour $\pi$ that is $(m,r)$-simple and such that
\begin{equation}
\label{eq:struct}
    w(\pi)\leq (1+\eps)\cdot\opt.
\end{equation}

We condition on the above event in the rest of the analysis.

According to \cite{arora1998polynomial}, \[\opt\leq dk^{1-1/d}L=\Phi,\]
where the equality follows by the definition of $\Phi$ (\cref{def:budget}). 
Therefore, 
\[w(\pi)\leq (1+\eps)\cdot \Phi.\]
Let $C_0$ be the root cell of the quadtree. Since $\pi$ is a closed path strictly contained in $C_0$, the set of points of $\partial C_0$ is $B:=\emptyset$, and the only matching on $B$ is $M:=\emptyset$. 
Let $\kappa$ be the number of points visited by $\pi$. 
Since $\pi$ is a $k$-salesman tour, $\kappa\geq k$.
Since $w(\pi)\leq (1+\eps)\cdot \Phi$, we may apply \cref{lem:discrete} on $C_0$ and $\{\pi\}$ and obtain an integer $\kappa'\geq \kappa$ and a budget $s\in \mathcal{S}$ such that $(\emptyset, \kappa')\in\MM_s^{C_0}$ and 
\begin{equation}
\label{eq:repr}
    s\leq (1+\eps)\cdot w(\pi).
\end{equation}
\cref{lem:repr} ensures that $\RR_s^{C_0}[\emptyset]$ represents $\MM_s^{C_0}[\emptyset]$, hence
\[\max\{\kappa''|(M', \kappa'')\in\MM_s^{C_0}[\emptyset]\text{ and }\emptyset\text{ fits } M'\}=\max\{\kappa''|(M', \kappa'')\in\RR_s^{C_0}[\emptyset]\text{ and }\emptyset\text{ fits } M'\}.\]

Since $\emptyset$ fits $\emptyset$ and $(\emptyset, \kappa')\in\MM_s^{C_0}[\emptyset]$,  we have $\kappa'\leq \max\{\kappa''|(M', \kappa'')\in\MM_s^{C_0}[\emptyset]\text{ and }\emptyset\text{ fits } M'\}$. Therefore, there exists $(M', \kappa'')\in \RR_s^{C_0}[\emptyset]$ such that $M'$ fits $\emptyset$ and $\kappa''\geq \kappa'$.
The only matching $M'$ on $\emptyset$ that fits $\emptyset$ is $\emptyset$, hence $M'=\emptyset$. Thus, $(\emptyset, \kappa'')\in\RR_{s}^{C_0}$, for some $\kappa''\geq\kappa'\geq k$.

Let $s^*$ be the output of the algorithm, which is the minimum budget such that $(\emptyset, \kappa'')\in\RR_{s^*}^{C_0}$ for some $\kappa''\geq k$. From \eqref{eq:struct} and \eqref{eq:repr}, we have
\[s^*\leq (1+3\eps)\cdot \opt.\]
Replacing $\eps$ by $\eps':=\eps/3$ leads to the approximation ratio in the claim.

The running time in the claim follows from \cref{lem:RT}.

For the derandomization, observe that the only step using randomness is the random shift to construct the quadtree.
Since there are $O(n^d)$ possible shifts, the  algorithm can be derandomized by increasing the running time by a factor $O(n^d)$.

This completes the proof of \cref{thm:DP}.
\end{proof}

\begin{remark}
The spanner techniques introduced by Rao and Smith \cite{rao1998approximating} lead to a better running time for TSP, but those techniques do not seem to apply to $k$-TSP. Indeed, a key property for TSP is the existence of a near-optimal solution using the spanner only (see Lemma 5.1 of \cite{kisfaludi2022gap}). However, this property does not hold for $k$-TSP, since the solution to $k$-TSP might be much less expensive compared with the spanner of the entire graph. 
\end{remark}

\bibliographystyle{plainurl}
\bibliography{biblio}

\appendix 
\newpage
\section{Pseudocode for the First Algorithm}
\label{sec:bmp-first}
\begin{algorithm}[H]
\caption{$\texttt{Budget-Multipath-First}(C, s, B)$}\label{alg:bmp-first}
\If{$C$ is a leaf cell}{
    \If{$|B|=0$}{
        $\MM^C_s[B]\gets \{(\emptyset, 0)\}$
    }
    \ElseIf{$|B|=2$}{
        Let $u$ and $v$ be the two points in $B$\\
        \If{$C\cap P\neq \emptyset$}{
            Let $p$ be the (only) point in $C\cap P$ and let $n_p$ be the multiplicity of $p$ in $P$\\
            \If{$\text{dist}(u, p)+\text{dist}(p, v)\leq s$}{
                $\MM^C_s[B]\gets\{(\{(u,v)\}, n_p)\}$
            }
            \ElseIf{$\text{dist}(u, v)\leq s$}{
                $\MM^C_s[B]\gets\{(\{(u,v)\}, 0)\}$
            }
            \Else{
                $\MM^C_s[B]\gets\emptyset$
            }
        }
        \Else{
            \If{$\text{dist}(u, v)\leq s$}{
                $\MM^C_s[B]\gets\{(\{(u,v)\}, 0)\}$
            }
            \Else{
                $\MM^C_s[B]\gets\emptyset$
            }
        }
    }
    \Else{
        \For{$s_1,s_2\in\mathcal{S}$ such that $s_1+s_2\leq s$}{
            \For{$u, v\in B$ such that $u\neq v$}{
                \For{$(M,\kappa)\in\MM^C_{s_1}[B\setminus\{u,v\}]$}{
                    \If{$\text{dist}(u,v)\leq s_2$}{
                        Insert $(M\cup\{(u,v)\},\kappa)$ into $\MM^C_s[B]$
                    }
                }
            }
        }
    }
}
\Else{
    Let $C_1,\ldots,C_{\powerd}$ be the children of $C$\\
    \For{$s_1,\dots,s_{\powerd}\in\mathcal{S}$ such that $\sum_i s_i\leq s$}{
        \For{$(M_1,\kappa_1)\in\MM^{C_1}_{s_1},\ldots,(M_{\powerd},\kappa_{\powerd})\in\MM^{C_{\powerd}}_{s_{\powerd}}$}{
            \If{$M_1,\ldots,M_{\powerd}$ are compatible}{
                Let $M\gets \texttt{Join}(M_1,\dots,M_{\powerd})$\\
                \If{the set of endpoints of $M$ equals $B$}{
                    Insert $(M,\sum_i \kappa_i)$ into $\MM^C_s[B]$
                }
            }
        }
    }
}
\end{algorithm}

\newpage
\section{Pseudocode for the Improved Algorithm}

The difference with \cref{alg:bmp-first} is on the last line of the \texttt{if} and \texttt{else} blocks, where \texttt{reduce} was added as explained in \cref{sec:improved-algo}.

\label{sec:bmp-improved}
\begin{algorithm}[H]
\caption{$\texttt{Budget-Multipath-Improved}(C, s, B)$}\label{alg:bmp-improved}
\If{$C$ is a leaf cell}{
    \If{$|B|=0$}{
        $\RR^C_s[B]\gets \{(\emptyset, 0)\}$
    }
    \ElseIf{$|B|=2$}{
        Let $u$ and $v$ be the two points in $B$\\
        \If{$C\cap P\neq \emptyset$}{
            Let $p$ be the (only) point in $C\cap P$ and let $n_p$ be the multiplicity of $p$ in $P$\\
            \If{$\text{dist}(u, p)+\text{dist}(p, v)\leq s$}{
                $\RR^C_s[B]\gets\{(\{(u,v)\}, n_p)\}$
            }
            \ElseIf{$\text{dist}(u, v)\leq s$}{
                $\RR^C_s[B]\gets\{(\{(u,v)\}, 0)\}$
            }
            \Else{
                $\RR^C_s[B]\gets\emptyset$
            }
        }
        \Else{
            \If{$\text{dist}(u, v)\leq s$}{
                $\RR^C_s[B]\gets\{(\{(u,v)\}, 0)\}$
            }
            \Else{
                $\RR^C_s[B]\gets\emptyset$
            }
        }
    }
    \Else{
        \For{$s_1,s_2\in\mathcal{S}$ such that $s_1+s_2\leq s$}{
            \For{$u, v\in B$ such that $u\neq v$}{
                \For{$(M,\kappa)\in\RR^C_{s_1}[B\setminus\{u,v\}]$}{
                    \If{$\text{dist}(u,v)\leq s_2$}{
                        Insert $(M\cup\{(u,v)\},\kappa)$ into $\RR^C_s[B]$
                    }
                }
            }
        }
        $\RR^C_s[B]\gets \texttt{reduce}(\RR^C_s[B])$
    }
}
\Else{
    Let $C_1,\ldots,C_{\powerd}$ be the children of $C$\\
    \For{$s_1,\dots,s_{\powerd}\in\mathcal{S}$ such that $\sum_i s_i\leq s$}{
        \For{$(M_1,\kappa_1)\in\RR^{C_1}_{s_1},\ldots,(M_{\powerd},\kappa_{\powerd})\in\RR^{C_{\powerd}}_{s_{\powerd}}$}{
            \If{$M_1,\ldots,M_{\powerd}$ are compatible}{
                Let $M\gets \texttt{Join}(M_1,\dots,M_{\powerd})$\\
                \If{the set of endpoints of $M$ equals $B$}{
                    Insert $(M,\sum_i \kappa_i)$ into $\RR^C_s[B]$
                }
            }
        }
    }
    $\RR^C_s[B]\gets\texttt{reduce}(\RR^C_s[B])$
}
\end{algorithm}

\newpage

\section{Proof of \cref{lem:RT}}
\label{sec:proof-lem:RT}
Let $R_{\text{max}}$ be the maximum size of any $\RR^C_s$.
Let $B_{\text{max}}$ be the maximum size of any fine multiset $B\subseteq \partial C$ over all cells $C$.
Let $F_{\text{max}}$ be the maximum number different fine multisets $B\subseteq\partial C$ over all cells $C$.

Fix a cell $C$, a budget $s\in\mathcal{S}$, and a fine multiset $B\subseteq \partial C$. We analyze the running time of \cref{alg:bmp-improved} on $(C,s,B)$. If $C$ is a leaf cell, there are at most $O(|\mathcal{S}|^2\cdot B_{\text{max}}^2\cdot R_{\text{max}})$ insert operations. Since $\RR_s^C[B]$ has size at most $O(|\mathcal{S}|^2\cdot B_{\text{max}}^2\cdot R_{\text{max}})$, by \cref{lem:reduce}, the call to \texttt{reduce} runs in time $O(|\mathcal{S}|^2\cdot B_{\text{max}}^2\cdot R_{\text{max}}\cdot 2^{B_\text{max}})$. Therefore, the running time is $O(|\mathcal{S}|^2\cdot B_{\text{max}}^2\cdot R_{\text{max}}\cdot 2^{B_\text{max}})$. If $C$ is not a leaf cell, there are at most $O(|\mathcal{S}|^\powerd \cdot R_\text{max}^\powerd)$ insert operations. Since $\RR_s^C[B]$ has size at most $O(|\mathcal{S}|^\powerd \cdot R_\text{max}^\powerd)$, by \cref{lem:reduce}, the call to \texttt{reduce} runs in time $O(|\mathcal{S}|^\powerd \cdot R_\text{max}^\powerd\cdot 2^{O(B_\text{max})})$. Thus, the running time is $O(|\mathcal{S}|^\powerd \cdot R_\text{max}^\powerd\cdot 2^{O(B_\text{max})})$. Hence, in both cases, the running time of \cref{alg:bmp-improved} is $O(|\mathcal{S}|^\powerd\cdot R_\text{max}^\powerd\cdot 2^{O(B_\text{max})})$.

From \cite{arora1998polynomial}, there are $O(n\log n)$ cells in the quadtree. Hence there are $O(n\log n\cdot |\mathcal{S}|\cdot F_\text{max})$ tuples $(C, s, B)$. Therefore, the running time of \cref{alg:bmp-improved} for all tuples $(C,s,B)$ is $O(n\log n\cdot F_\text{max}\cdot |\mathcal{S}|^{\powerd+1}\cdot R_\text{max}^\powerd\cdot 2^{O(B_\text{max})})$.

To analyze that running time, we first bound $|\mathcal{S}|$.

\begin{fact}
    $|\mathcal{S}|=O(\log^2 n/\eps)$.
\end{fact}
\begin{proof}
Let $L$ be the side-length of the bounding box. The budgets are in $\{0\}\cup[1/(r^2+m^{1/(d-1)}), (1+\eps)^2\cdot \Phi]$, where $\Phi=dk^{1-1/d}L$ (\cref{def:budget}). Since the instance is well-rounded, $L=O(k^2)=O(n^2)$ (\cref{def:well-rounded}), therefore there are at most $O(\log^2 n/\eps)$ different budgets. Hence $|\mathcal{S}|=O(\log^2 n/\eps)$.
\end{proof}

By \cref{fct:finebound}, we have
\[B_\text{max}\leq 2d\cdot 2r^{d-1}.\]

Next, we bound $F_\text{max}$. Let $C$ be a cell and $F$ be one of its facets.

For any fine multiset $B\subseteq\partial C$, let $b_F$ denote the number of points of $B$ that are on $F$. By \cref{def:fine}, either one of the following two cases holds: (1) $b_F\leq 1$ and $B\cap F\subseteq\text{grid}(F, m)$; (2) $b_F\geq 2$ and $B\cap F\subseteq\text{grid}(F, g(b_F))$ for some $g(b_F)\leq r^{2d-2}/b_F$. Moreover, each point in $\text{grid}(F, g(b_F))$ occurs at most twice in $B\cap F$.

We show that $b_F\leq 2r^{d-1}$. In case (1), this is trivial. In case (2), since each point from $\text{grid}(F, g(b_F))$ is contained at most twice in $B\cap F$, we have $b_F\leq 2g(b_F)$. Together with $g(b_F)\leq r^{2d-2}/b_F$, we have $b_F\leq 2r^{d-1}$.

In case (1), either $b_F = 0$ or there are $m$ possibilities for the unique point in $B\cap F\subseteq\text{grid}(F,m)$. In case (2), for a fixed $b_F\geq 2$ there are at most ${2g(b_F)\choose b_F}$ possibilities for $B\cap F$ because no point is present more than twice in $B\cap F$. Since ${2g(b_F)\choose b_F}\leq {2r^{2d-2}/b_F\choose b_F}$, the total number of possibilities for $B\cap F$ is at most
\[1+m+\sum_{b_F=2}^{2r^{d-1}}{2r^{2d-2}/b_F\choose b_F}.\]

By \cref{def:fine}, $m= (O((\sqrt{d}/\eps)\log L))^{d-1}$. By \cite[Claim~3.4]{kisfaludi2022gap}, ${2r^{2d-2}/b_F\choose b_F}=2^{O(r^{d-1})}$. Thus, the number of possibilities for $B\cap F$ is at most
\[(O((\sqrt{d}/\eps)\log L))^{d-1} + \sum_{b_F=2}^{2r^{d-1}} 2^{O(r^{d-1})} = (O((\sqrt{d}/\eps)\log L))^{d-1} + 2^{O(r^{d-1})}.\]

\begin{fact}
\label{fact:2t}
    Let $x, y\in\mathbb{R}$ and $t\in\mathbb{N}$. We have $(x+y)^t\leq 2^t\cdot x^t\cdot y^t$.
\end{fact}

By \cref{fact:2t} and the fact that $C$ has $2d$ facets,
\begin{align*}
F_\text{max} &= \left((O((\sqrt{d}/\eps)\log L))^{d-1} + 2^{O(r^{d-1})}\right)^{2d}\\
&\leq 2^{2d}\cdot (O((\sqrt{d}/\eps)\log L))^{(d-1)2d}\cdot \left(2^{O(r^{d-1})}\right)^{2d}.
\end{align*}

Moreover, since $d$ is a constant and $L=O(n^2)$, we have
\[F_\text{max}=2^{O(r^{d-1})}\cdot \log^{(d-1)\cdot 2d} n.\]

Finally, we bound $R_\text{max}$. From \cref{lem:reduce}, since $\texttt{reduce}$ is applied to each $\RR_s^C[B]$, we have $|\RR_s^C[B]|\leq 2^{B_\text{max}-1}$. Thus
\[R_\text{max}\leq F_\text{max}\cdot 2^{B_\text{max}-1}\leq 2^{O(r^{d-1})}\cdot \log^{(d-1)\cdot 2d} n\cdot 2^{2d\cdot 2r^{d-1} - 1} = 2^{O(r^{d-1})}\cdot \log^{(d-1)\cdot 2d} n.\]



In conclusion,
\begin{align*}
n\log n\cdot F_\text{max}\cdot |\mathcal{S}|^{\powerd+1}\cdot R_\text{max}^\powerd\cdot 2^{O(B_\text{max})} &= n\cdot \left(2^{O(r^{d-1})}\right)^{1+\powerd}\cdot \log^{1 + (d-1)\cdot 2d + 2\powerd + (d-1)\cdot 2d\cdot\powerd} n\\
& \;\;\;\;\;\;\;\; \cdot (1/\eps)^\powerd\cdot 2^{O(2d\cdot 2r^{d-1})}\\
&\leq n\cdot 2^{O(r^{d-1})}\cdot \log^{2d^2\cdot 2^d} n.
\end{align*}

This completes the proof of the claim.

\end{document}